\theoremstyle{remark}
\newtheorem{theorem}{Theorem}
\newtheorem{lemma}{Lemma}
\newcommand{\removelatexerror}{\let\@latex@error\@gobble}
\title{\LARGE \bf
A Novel Technique for Rejecting Non-Aircraft Artefacts in Above Horizon Vision-Based Aircraft Detection}
\author{Jasmin James$^1$,  Jason J. Ford$^1$ and Timothy L. Molloy$^2$
\thanks{$^1$J. James and J. J. Ford are with the School of Electrical Engineering and Computer Science, Queensland University of Technology, 2 George St, Brisbane QLD, 4000 Australia. {\tt\small jasmin.martin@qut.edu.au, j2.ford@qut.edu.au}
\newline $^2$ T. Molloy is with the Department of Electrical and Electronic Engineering, University of Melbourne, Parkville VIC, 3010 Australia {\tt\small tim.molloy@unimelb.edu.au}.}%
}
\begin{document}
\maketitle
\thispagestyle{empty}
\pagestyle{empty}

\begin{abstract}
Unmanned aerial vehicle (UAV) operations are steadily expanding into many important applications. A key technology for better enabling their commercial use is an onboard  sense and avoid (SAA) technology which can detect potential mid-air collision threats in the same manner expected from a human pilot. Ideally, aircraft should be detected as early as possible whilst maintaining a low false alarm rate, however, textured clouds and other unstructured terrain make this trade-off a challenge. In this paper we present a new technique for the modelling and detection of aircraft above the horizon that is able to penalise non-aircraft artefacts (such as textured clouds and other unstructured terrain). 
We evaluate the performance of our proposed system on flight data of a Cessna 172 on a near collision course encounter with a ScanEagle UAV data collection aircraft. By penalising non-aircraft artefacts we are able to demonstrate, for a zero false alarm rate, a mean detection range of 2445m corresponding to an improvement in  detection ranges by  9.8\% (218m). 
\end{abstract}

\section{Introduction}
The unmanned aerial vehicle (UAV) market worldwide is projected to grow by US \$47.8 Billion by 2025, with a compounded growth of 18.8\%. This growth is driven by the increasing use of UAVs in various commercial applications, such as monitoring, surveying and mapping, precision agriculture, aerial remote sensing, product delivery and many more \cite{Researchandmarkets,PwC,Dalamagkidis2008}. The consistent increase in the global market has propelled efforts to ensure that routine, standard and flexible UAV operations are integrated into the national airspace such that they do not compromise the existing safety levels  \cite{Clothier2015}. The risk of mid-air collision is an important safety concern that is both faced and posed by UAVs.
The  capability to avoid mid-air collisions would allow UAVs to more routinely operate in common airspace \cite{Zarandy2015,Morris20005}.

In order to reduce the risk of a mid-air collision, the national airspace is strictly regulated with several safety layers \cite{Clothier2015,Nussberger2014, Zarandy2015}. The first few layers involve operational procedures, air traffic management and cooperative collision avoidance systems. The final layer is for potential mid-air collision threats that are not caught by the other layers e.g. aircraft that are not communicating their presence. In these situations non-cooperative collision avoidance is necessary.  For a manned aircraft this final safety layer involves a pilot visually seeing and then avoiding a collision threat. For UAVs this final safety layer involves sensing and avoiding potential non-cooperative, mid-air collision threats with a proficiency matching or exceeding that of human pilots. 
A general guideline for how well human pilots detect potential collision threats is given in \cite{andrews1989modeling} where pilots, who were alerted to the presence of potential collision threat, were able to detect them with an $86\%$ success rate at a median range of $2593$m.

For sense and avoid (SAA), in small to medium sized UAVs, machine vision has been established as a potential technology as vision sensors have power, cost, size and weight benefits over other sensing approaches \cite{Mcfadyen2016}.  To meet the approximate human guidelines (as well as providing sufficient time for avoidance manoeuvres)  it is desirable to  detect aircraft as early as possible in an image sequence. At these ranges aircraft appear in vision sensors as a very small number (approx 1-10) of  locally dim pixels that poorly contrast with the background (see Figure \ref{fig:exampleAircraft} for an example of aircraft size). In this paper we will focus on the ``sense'' aspect of SAA, specifically vision-based aircraft detection. There have been a variety of collision avoidance strategies  to address  the ``avoid'' aspect suitable for use with vision-based aircraft detection approaches (see \cite{Gunasinghe,Mcfadyen2016} and references therein), however that is out of scope for this paper.
 
 \begin{figure}
\begin{center}
\includegraphics[scale=0.55,trim={0.0cm 0cm 0.0cm 0.0cm}, clip]{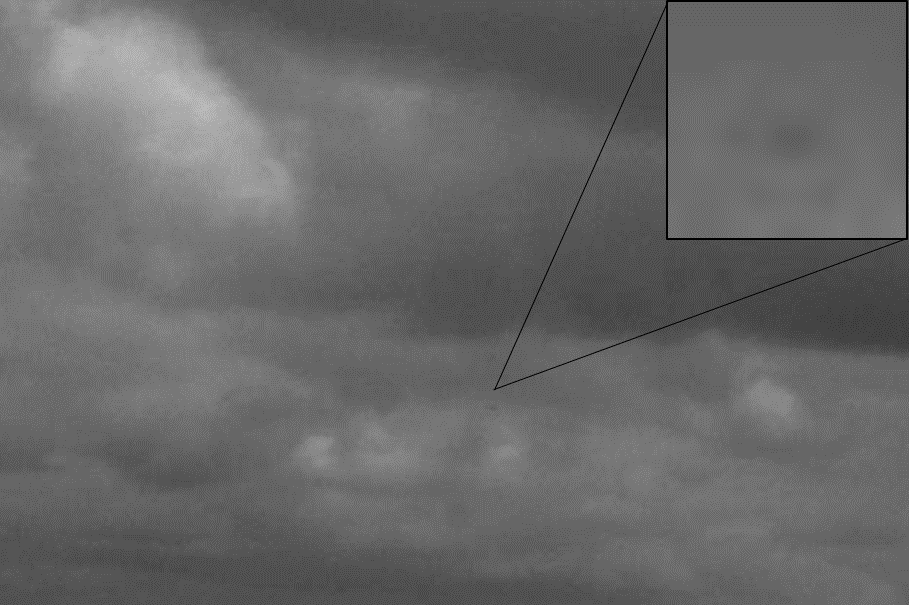}
  \caption{An example of the above horizon conditions and aircraft target size which we are trying to detect. Note the texture in the clouds and the similarities between the aircraft and cloud features}
\label{fig:exampleAircraft}
\end{center}
\end{figure}
 
  \begin{figure*}
\begin{center}
\includegraphics[scale=0.55,trim={0.0cm 0cm 0.0cm 0.0cm}, clip]{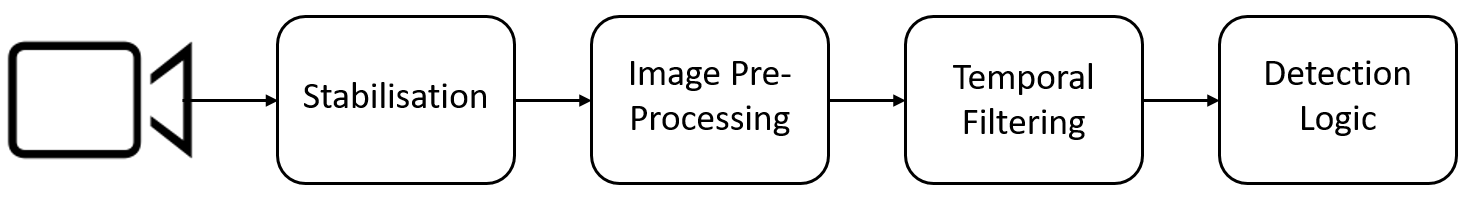}
  \caption{Overview of our multi-stage detection system. An image sensor captures an image which is then stabilised. Morphology is used for the image pre-processing stage and hidden Markov model (HMM) filtering for the temporal filtering. For the detection logic stage we propose our new system which exploits the visual appearance of aircraft emergence and common false alarms. }
\label{fig:stagesPropSys}
\end{center}
\end{figure*}

The most promising approaches for vision-based aircraft detection that have been presented in the literature exploit the use of a multi-stage detection pipeline \cite{Carnie2006, Lai2013,Nussberger2014,molloy2017below, jamesCST, JamesRAL}. The first stage typically exploits image pre-processing where spatial and visual aircraft features are highlighted and background clutter is suppressed.  Popular approaches that have been proposed in the literature to achieve this include morphology \cite{Carnie2006,Lai2013,Nussberger2014},   and image frame differencing  \cite{molloy2017below,Nussberger2014}. Various machine learning  \cite{Dey2011} and deep learning \cite{Rozantsev2014,Hwang2018} approaches have  been investigated to exploit the visual appearance of aircraft.  Recently \cite{JamesRAL} used a deep CNN fused with morphological processing in the image pre-processing stage to detect aircraft above the horizon with a mean detection range of  $2527$m and no false alarms. However there are several drawbacks in using learnt approaches in this application as data is expensive and challenging to collect.

In the second stage  temporal filtering approaches are often utilised to emphasise and extract features that possess aircraft-like dynamics such as: Viterbi-based filtering \cite{Barniv, Carnie2006, Lai2013, molloy2017below}, Kalman filtering \cite{Nussberger2014}, and hidden Markov model (HMM) filtering \cite{Barniv,Lai2013, molloy2017below, jamesCST}.

Finally the detection logic stage aims to utilise the information available from the image pre-processing and temporal filtering stages in order to declare whether an aircraft is present or not. Some more simple approaches include looking for a detection in a $5 \times 5$ neighbourhood in the previous frame \cite{Petridis2008} and checking  if the aircraft is increasing in area \cite{Cho2013}.  In \cite{jamesCST} they proposed a new way of modelling aircraft emergence and were able to obtain a theoretically optimal detection logic.   In this paper we aim to build off the theoretically optimal detection logic presented in \cite{jamesCST} however we also aim to exploit the visual appearance of aircraft emergence and common false alarms.

The key contributions of this paper are casting the vision-based aircraft detection problem as an optimal stopping problem where we are able to introduce a penalty  on detecting non-aircraft artefacts such as cloud features.  We are able to establish that the optimal solution occurs on first entry of the change posterior into a stopping region characterised by the union of convex sets. Using these properties we propose a new detection logic stage that is able to improve detection ranges and false alarm rates.

The rest of this paper is structured as follows. In Section \ref{sec:prop} we set up our model of aircraft dynamics and observations, this lets us cast our problem as an optimal stopping problem and  propose two candidate rules for aircraft detection. In Section  \ref{sec:res} we evaluate the performance of our proposed rules on flight data of near collision course encounters. We  provide concluding remarks in \ref{sec:conc}.


\section{Proposed System}\label{sec:prop}
In this section we describe our proposed vision-based aircraft detection system. Similar to several state of the art approaches we exploit a multi-stage detection pipeline as seen in Figure \ref{fig:stagesPropSys}. An image sensor captures an image which is then stabilised (the data we will test on was stabilised by a GPS-INS sensor see \cite{Bratanov2017} for full details). We will now describe the other stages of our proposed detection system.

\subsection{Image Pre-Processing}
As noted in the introduction morphological processing is commonly exploited in the image pre-processing stage. For a greyscale image $I$, the dilation by a morphological structuring element $S$ is denoted by $I \oplus S$ and erosion is denoted by $I \ominus S$.  Similar to several state of the art approaches \cite{jamesCST, Lai2013} we use bottom hat  $ [I \oplus S] \ominus S -I$  morphological processing which emphasises dark targets
  In \cite{Geyer2009} they observed that targets are generally darker than the background (supporting the notion that bottom hat filters are appropriate).

\subsection{Proposed Temporal filtering}
For our temporal filtering stage we use a HMM approach. 
Let us consider an aircraft in an image sequence which we are trying to detect as soon as possible. The aircraft emerges over time and is (potentially) first visually apparent in an image frame from a single pixel in size.

For $k \geq 0$,  we introduce a Markov chain with a state to represent each of an aircraft's possible $N$ pixel locations in an image.  
Following \cite{jamesCST} we also introduce an extra state to denote when the aircraft is not visually apparent anywhere in the image frame (that is, it has not visually emerged yet, or there is no collision threat). Let us denote this Markov chain as $X_k \in \{e_1, e_2, \dots, e_N, e_{N+1}\}$ where $e_i \in \mathbb{R}^{N+1}$ are indicator vectors with $1$ as the $i$th element and $0$ elsewhere. For $ i \in \{ 1,\dots, N\}$, $e_i$ corresponds to the aircraft being visually apparent at the $i$th pixel and $e_{N+1}$ corresponds to the aircraft not being visually apparent. We denote this the out of image state.

The following are the three key parameters of a HMM description of target image motion for $1 \leq i,j \leq N+1$

\begin{enumerate}
    \item State transition probabilities:\\ $A^{i,j} =P\left( \left. X_{k+1} = e_i \right| X_k = e_j \right)$  gives the probability of moving between different states. This can either be from the out of image state to a pixel in the image, or moving between different pixels.  An aircraft not current located in the image is able to transition from the out of image state to any pixel in the image allowing for the possibility that an  aircraft can visually emerge anywhere as it approaches from a distance. 
    Possible aircraft inter-frame motion can be modelled by a transition patch (see \cite{Lai2013} for detailed explanation of patches). In this paper we will use the patch that allows transitions to side and above pixels as seen in Figure \ref{fig:upTransitions}.
    
    \item Initial probabilities: \\ $\pi^i = P(X_0 = e_i)$ denotes the probability that the target is initially located in state $e_i$. We initialise our filter with an equal probability of $\pi^i  = \frac{1}{N+1}$ for all states.
    
    \item Measurement probabilities: \\ $b^i(y_k) = P(y_k|X_k = e_i)$ specify the probability of obtaining the observed image measurement $y_k$ given that the target is actually in pixel location $e_i$.
    At each time $k>0$ we obtain a noise corrupted, greyscale image $y_k$ that has been morphologically processed using bottom hat. We denote the measurement of the $i$th pixel at time $k$ as $y^i_k$ and following  \cite{Lai2013, jamesCST} for $i,j \in \{1,\dots, N+1 \}$ our diagonal matrix of (unnormalised) output densities is then given by the approximation
     \begin{equation*}
    B^{ij}(y_k) = 
     \begin{cases}
     b^i(y_k) & \text{for } i=j\\
     0 & \text{for } i \neq j\\
     \end{cases}
     \end{equation*}
    where  $b^i(y_k) = y^i_k+1$ for $i\in \{1, \dots, N\}$ and $b^{N+1}(y_k) = 1$ (see \cite{jamesCST} for justification). Essentially, the likelihood of an aircraft is proportional to the strength of the morphology output.
\end{enumerate}

 \begin{figure}
\begin{center}
\includegraphics[scale=0.6,trim={0.0cm 0cm 0.0cm 0.0cm}, clip]{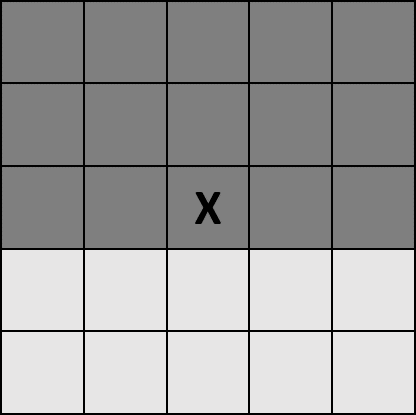}
  \caption{The possible transition of an aircraft at pixel (state) $X$. An aircraft is able to transition to its current pixel (state) or any of the neighbouring pixels beside or above indicated by the darker grey regions. }
\label{fig:upTransitions}
\end{center}
\end{figure}

For $k>0$ we can now calculate   $\hat{X}_k \in R^{N+1}$   via the HMM filter   \cite{elliott1995}, which can be regarded as an indicator of likely target locations

\begin{equation}\label{eqn:CME}
\hat{X}_k= N_{k} B(y_k) A \hat{X}_{k-1}.
\end{equation}
With the initial condition $\pi$ and where $N_{k}$ are scalar normalisation factors defined by 
\begin{equation}\label{eqn:Nk}
N_{k}^{-1} \triangleq \langle \underline{1},B(y_k) A \hat{X}_{k-1} \rangle.
\end{equation}
Here, $\langle \cdot \rangle$ denotes an inner product.

\subsection{Proposed Detection Logic}
We now present our proposed detection logic. We first cast our problem as an optimal stopping rule and establish an optimal policy. We then propose a general pragmatic greedy decision rule for use as our detection logic.

\subsubsection{Problem Formulation}
Our desired detection logic can be characterised by the following continuing and stopping costs, 
\begin{eqnarray}
&&\mathcal{C}(\hat{X}) \triangleq \bar{c}_1 \hat{X}
\\
&& \mathcal{S}_i(\hat{X}) \triangleq  c_2 \hat{X}^{N+1} + \bar{c}_i \hat{X}.
\end{eqnarray}
where $\bar{c}_1 \in  \mathbb{R}^{N+1}$ ($\bar{c}_{1}^{N+1}=0$, $\bar{c}_1^i>0$ for $i \in \{1,\ldots, N\}$) denotes the
delay penalty (when an aircraft is present but a detection has not been declared), $c_2>0$ denotes the false alarm penalty (where an aircraft is not present but a detection has been declared) and $\bar{c}_i \in \mathbb{R}^{N+1}$ for $i\in \{1, \ldots,N\}$ denotes a penalty for incorrectly declaring a detection on a non-aircraft artefact. 
Given the location decision $i$ such that  $\bar{\mathcal{S}}(\hat{X}) \triangleq \min_{i \in \{ 1, \ldots,N\}  } \mathcal{S}_i(\hat{X}_\tau)$, we  seek to design a stopping time $\tau \geq 0$  that minimises the following cost criterion 
\begin{equation} \label{eqn:cost}
     {J} (\tau, \hat{X}) \triangleq 
     \mathbb{E} \left[ \left. \sum_{k=0}^{\tau -1}  
     \mathcal{C}(\hat{X}_k) 
+ \bar{\mathcal{S}}(\hat{X}_\tau) 
     \right|   \hat{X}  \right].
\end{equation}
Where  $\mathbb{E} \big[ \cdot \big| \hat{X} \big]$ denotes the expectation operation corresponding to the probability measure where the initial state has distribution $\hat{X}$,


\subsubsection{Optimal Policy}
Let us consider a stopping action $u_k \in \{1  \text{ (continue)}, 2  \text{ (stop)}\}$. Then, there is an optimal policy $\mu^*(\hat{X}_k)$ to minimise our cost criterion \eqref{eqn:cost} given by the 
value function $V(\hat{X}_k) \triangleq \min_{\tau} \{ {J}(\tau, \hat{X}_k)\}$. This value function can be described by the following  recursion (similar to \cite[pg. 258]{krishnamurthy2016})
\begin{equation}\label{eqn:val}
    \begin{split}
        V(\hat{X}_k)        = \min& \left\{\mathcal{C}(\hat{X}_k) \right.  \\
        &+ \left. \mathbb{E} \left[ \left. V\left(\hat{X}^+(\hat{X}_k,y)\right)  \right| \hat{X}_k  \right] , \right.
        \left. \bar{\mathcal{S}}(\hat{X}_k)  \right\}
    \end{split}
\end{equation}
where $\hat{X}^+(\hat{X},y) = \langle \underline{1},B(y) A \hat{X} \rangle^{-1} B(y)A \hat{X} $, and $B(y) = \text{diag}(b^1(y),\dots ,b^{N+1}(y))$. 
If we let $Q(\hat{X}) \triangleq \mathcal{C}(\hat{X}) + \mathbb{E}[ V(\hat{X}^+ (\hat{X} , y)) |\hat{X}]$ denote the total cost incurred if continuing, then the optimal policy is given by
\begin{equation} \label{equ:policy}
\mu^*(\hat{X}_k) = \begin{cases} 
      1 \text{ (continue)} & \text{if}  \quad Q(\hat{X}_k) < \bar{\mathcal{S}}(\hat{X}_k)  \\
      2 \text{ (stop)} &  \text{if} \quad  Q(\hat{X}_k) \ge \bar{\mathcal{S}}(\hat{X}_k).  
    \end{cases}
\end{equation}

In general, the value recursion \eqref{eqn:val} is difficult to compute making it unsuitable for use in the vision-based aircraft detection application. We can however establish some properties of the optimal solution that allow us to propose some practical rules. 

\subsubsection{Structure of the Optimal Policy}
We now establish that our optimal stopping problem can be solved by finding $N$ convex stopping sets  (rather than solving the dynamic programming recursion equations directly).  We define the stopping region for state $e_i$ as follows  
$\mathcal{R}^i_S\triangleq \{\hat{X}: S_i(\hat{X}) \le Q(\hat{X})  \}$.

\begin{theorem}\label{thm:sets}
Consider the value recursion \eqref{eqn:val} and let  $\mathcal{R}_S \triangleq \cup_{i\in \{ 1, \ldots,N\}} \mathcal{R}^i_S $ be the union of the $N$ regions $\mathcal{R}_S^i$. Then, the optimal stopping time given by 
\[
\tau^*\triangleq \inf \{ k : \hat{X}_k \in \mathcal{R}_S \}
\]
Furthermore, the regions $\mathcal{R}_S^i$ forming the optimal stopping region $\mathcal{R}_S$ are convex and contain $e_i$. 
\end{theorem}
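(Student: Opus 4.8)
The plan is to read both parts of the theorem off the optimal policy \eqref{equ:policy} together with a concavity property of the value function in \eqref{eqn:val}, in the spirit of the structural analyses of HMM quickest-change problems in \cite{jamesCST,krishnamurthy2016}. For the characterisation of $\tau^*$, the first step is to rewrite the ``stop'' set of \eqref{equ:policy}. Because $\bar{\mathcal{S}}(\hat{X})=\min_{i\in\{1,\dots,N\}}\mathcal{S}_i(\hat{X})$, the inequality $Q(\hat{X})\ge\bar{\mathcal{S}}(\hat{X})$ holds if and only if $Q(\hat{X})\ge\mathcal{S}_i(\hat{X})$ for at least one $i$, i.e.\ if and only if $\hat{X}\in\bigcup_i\mathcal{R}^i_S=\mathcal{R}_S$; thus the ``stop'' region of $\mu^*$ is exactly $\mathcal{R}_S$. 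The standard verification argument for optimal stopping — the first-passage time into the stopping region of a function solving the dynamic-programming equation \eqref{eqn:val} is optimal, cf.\ \cite{krishnamurthy2016} — then gives $\tau^*=\inf\{k:\hat{X}_k\in\mathcal{R}_S\}$. I would add a short remark that this $\tau^*$ is almost surely finite and attains $V$, since the strictly positive per-frame delay cost $\bar{c}_1^i>0$ makes any policy that never stops infinitely costly whenever the aircraft is present.

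For convexity of the sets $\mathcal{R}^i_S$, the key lemma is that $V$, and hence $Q$, is concave on the belief simplex. I would prove this by value iteration: start from $V_0=\bar{\mathcal{S}}$, a pointwise minimum of affine maps and hence concave, and show that the Bellman operator in \eqref{eqn:val} preserves concavity. The continuation term $\mathbb{E}\bigl[V(\hat{X}^+(\hat{X},y))\mid\hat{X}\bigr]$ is an integral over $y$ of terms $\langle\underline{1},B(y)A\hat{X}\rangle\,V(\hat{X}^+(\hat{X},y))$, each of which is the perspective transform of the concave map $V$ precomposed with the linear map $\hat{X}\mapsto B(y)A\hat{X}$ and is therefore concave in $\hat{X}$ (the perspective of a concave function is concave; the normalisation is well defined because $\langle\underline{1},B(y)A\hat{X}\rangle\ge\langle\underline{1},A\hat{X}\rangle=1$, using $b^i(y)\ge1$ and that the columns of $A$ sum to one); integrating preserves concavity, and adding the linear cost $\mathcal{C}$ keeps $Q$ concave, so $V=\min\{Q,\bar{\mathcal{S}}\}$ is concave. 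Passing to the monotone limit of the iteration yields concavity of $V$, hence of $Q$. Since each $\mathcal{S}_i$ is affine, $Q-\mathcal{S}_i$ is concave, so $\mathcal{R}^i_S=\{\hat{X}:Q(\hat{X})-\mathcal{S}_i(\hat{X})\ge0\}$ is a superlevel set of a concave function and therefore convex.

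That $e_i\in\mathcal{R}^i_S$ I would obtain by evaluating at the vertex: $\mathcal{S}_i(e_i)=c_2\,e_i^{N+1}+\bar{c}_i^i=\bar{c}_i^i$, while $Q(e_i)=\mathcal{C}(e_i)+\mathbb{E}\bigl[V(\hat{X}^+(e_i,y))\mid e_i\bigr]\ge\mathcal{C}(e_i)=\bar{c}_1^i$, using $V\ge0$ (immediate since all the costs are nonnegative). Under the natural normalisation that declaring the matching location $i$ at the vertex $e_i$ costs no more than one frame of delay, $\bar{c}_i^i\le\bar{c}_1^i$ — in particular whenever $\bar{c}_i^i=0$ — this gives $\mathcal{S}_i(e_i)\le Q(e_i)$, i.e.\ $e_i\in\mathcal{R}^i_S$.

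The main obstacle I anticipate is making the concavity lemma rigorous for a continuous measurement space: the measurability and integrability of the integrands, well-definedness of the normalisation for every $y$, and a dominated-convergence argument when passing from the finite-horizon values to $V$, together with a verification that value iteration really converges to the value function of the undiscounted criterion \eqref{eqn:cost}. The remainder is essentially bookkeeping around the already-derived policy \eqref{equ:policy}, and parallels the single-region argument of \cite{jamesCST}, the only new feature being the $N$ stopping actions, which turn the one convex stopping set there into a union of $N$ convex sets here.
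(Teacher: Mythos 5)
Your proposal is correct in substance and reaches all three conclusions of the theorem, but it obtains the key concavity lemma by a genuinely different route from the paper. The paper decomposes the problem into $N$ fixed-location stopping problems with value functions $V_i$ (terminal cost $\mathcal{S}_i$), invokes the textbook concavity result \cite[Theorem 7.4.2]{krishnamurthy2016} for each $V_i$, and then writes $V=\min_i V_i$ to conclude concavity of $V$; you instead work directly with the joint terminal cost $\bar{\mathcal{S}}=\min_i\mathcal{S}_i$ (concave as a pointwise minimum of affine maps) and show by value iteration that the Bellman operator in \eqref{eqn:val} preserves concavity via the perspective-transform argument. Your route is more self-contained and, notably, sidesteps the interchange-of-minimisation identity $V=\min_i V_i$, which needs care: since the location decision in \eqref{eqn:cost} is taken adaptively at the stopping time, one only gets $V\le\min_i V_i$ for free, so proving concavity of $V$ directly, as you do, is the more robust path; the price is that you must handle the measurability, normalisation and convergence details of undiscounted value iteration yourself rather than citing the reference. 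The remaining steps coincide with the paper's: the stop set of \eqref{equ:policy} equals $\cup_{i}\mathcal{R}^i_S$ because $Q(\hat{X})\ge\bar{\mathcal{S}}(\hat{X})$ holds iff $Q(\hat{X})\ge\mathcal{S}_i(\hat{X})$ for some $i$; convexity of each $\mathcal{R}^i_S$ follows because $Q-\mathcal{S}_i$ is concave and $\mathcal{R}^i_S$ is its superlevel set at zero; and the containment $e_i\in\mathcal{R}^i_S$ requires the normalisation $\mathcal{S}_i(e_i)=\bar{c}_i^{\,i}=0$ (or at least $\bar{c}_i^{\,i}\le Q(e_i)$), which the paper states only as the hypothesis ``if $\mathcal{S}_i(e_i)=0$'' and which you rightly make explicit as an assumption on $\bar{c}_i$ --- a point worth flagging, since the theorem statement itself does not impose it.
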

\begin{proof}
See Appendix for proof. 
\end{proof}

Theorem \ref{thm:sets}  establishes that the solution to our optimal stopping problem can be solved by finding the $N$ convex stopping sets $\mathcal{R}^i_S$. Given the parallels with our problem setup and quickest change detection and identification,  we expect the stopping region $\mathcal{R}_S$ to not be a connected region in general \cite{Dayanik}.

\subsubsection{A Greedy Decision Rule}
We now propose a practical greedy rule which can be applied in our application and establish some performance bounds.

Let us define $\mathcal{R}_g^i\triangleq \{\hat{X}: S_i(\hat{X}) \le C(\hat{X})  \}$ and the union of sets
$\mathcal{R}_g =\cup_{i\in \{ 1, \ldots,N\}} \mathcal{R}_g^i$.
We also define the probability of a false alarm (PFA)  as  $\text{PFA} \triangleq P(X_\tau=e_{N+1})$.

\begin{lemma}\label{lem:greedy}
For $i\in \{ 1, \ldots,N\}$, we consider  the union of sets $\mathcal{R}_g^i \subset \mathcal{R}^i_S$ and the value recursion
\eqref{eqn:val}. Then, the greedy decision rule given by
\[
\tau^g \triangleq \inf \{ k : \hat{X}_k \in  \mathcal{R}_g \}
\]
 achieves the performance bound 
\[
\text{PFA}  \le \frac{c_m}{c_m+c_2},
\]
where the maximum delay constant $c_m \triangleq \max_i \bar{c}_1^i$.

\end{lemma}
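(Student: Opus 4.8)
The plan is to first derive a \emph{pointwise} bound on the filter component $\hat{X}^{N+1}_{\tau^g}$ at the stopping instant, and then lift it to the unconditional false alarm probability using the tower property of conditional expectation.

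\textbf{Step 1: a pointwise bound at stopping.} On $\{\tau^g=k\}$ the greedy rule forces $\hat X_k\in\mathcal{R}_g$, so $\mathcal{S}_i(\hat X_k)\le\mathcal{C}(\hat X_k)$ for some $i\in\{1,\dots,N\}$. Writing this out,
\[
c_2\,\hat X_k^{N+1}+\langle \bar c_i,\hat X_k\rangle \;\le\; \langle \bar c_1,\hat X_k\rangle .
\]
Since the artefact penalty $\bar c_i$ is non-negative, $\langle \bar c_i,\hat X_k\rangle\ge 0$; and since $\hat X_k$ is a probability vector (the normalisation \eqref{eqn:Nk} forces $\langle \underline 1,\hat X_k\rangle=1$ with non-negative entries) and $\bar c_1^{N+1}=0$, we have $\langle \bar c_1,\hat X_k\rangle\le c_m\sum_{j=1}^N\hat X_k^j=c_m(1-\hat X_k^{N+1})$. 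Rearranging gives $c_2\hat X_k^{N+1}\le c_m(1-\hat X_k^{N+1})$, i.e. $\hat X_{\tau^g}^{N+1}\le c_m/(c_m+c_2)$ whenever $\tau^g<\infty$. This step is where the sign conventions $\bar c_i\ge 0$, $\bar c_1^{N+1}=0$, $c_2>0$ are used.

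\textbf{Step 2: from the filter to the true state.} The HMM recursion \eqref{eqn:CME}--\eqref{eqn:Nk} is exactly Bayes' rule, so $\hat X_k^{N+1}=P(X_k=e_{N+1}\mid y_1,\dots,y_k)$. Because $\tau^g$ is a stopping time for the observation filtration (each event $\{\hat X_k\in\mathcal{R}_g\}$ depends only on $y_1,\dots,y_k$), I would split the event $\{X_{\tau^g}=e_{N+1}\}$ over the values of $\tau^g$ and use the tower property on each $\{\tau^g=k\}$ to obtain $\text{PFA}=P(X_{\tau^g}=e_{N+1})=\mathbb{E}[\hat X_{\tau^g}^{N+1}]$ (assuming, as is standard in these detection problems, that $\tau^g<\infty$ almost surely; otherwise the identity and the bound below hold on $\{\tau^g<\infty\}$). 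Combining with Step~1 immediately yields $\text{PFA}\le c_m/(c_m+c_2)$. I would also note in passing that the hypothesis $\mathcal{R}_g^i\subseteq\mathcal{R}_S^i$ is automatic — non-negativity of $\mathcal{C}$ and $\bar{\mathcal S}$ propagates through \eqref{eqn:val} to give $V\ge 0$, hence $Q(\hat X)=\mathcal{C}(\hat X)+\mathbb{E}[V(\hat X^+)\mid\hat X]\ge\mathcal{C}(\hat X)$ and therefore $\{\mathcal{S}_i\le\mathcal{C}\}\subseteq\{\mathcal{S}_i\le Q\}$ — although this is not actually needed for the bound.

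The genuinely delicate point is Step~2 rather than Step~1: Step~1 is a one-line rearrangement of the stopping condition, whereas Step~2 requires justifying the identification of the filter component with the true posterior and the passage from a deterministic time to the random stopping time $\tau^g$, including a convention for the event $\{\tau^g=\infty\}$. No heavy machinery is needed — just the optional-sampling/tower argument and the basic consistency of the HMM filter — but these are the places where a careless argument would break.
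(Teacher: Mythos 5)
Your proposal is correct and follows essentially the same route as the paper's proof: the pointwise rearrangement $\mathcal{S}_i(\hat{X}_k)\le\mathcal{C}(\hat{X}_k)$ together with $\bar{c}_1\hat{X}_k\le c_m(1-\hat{X}^{N+1}_k)$ and $\bar{c}_i\hat{X}_k\ge 0$ is exactly the paper's ``simple algebra'' step, and your tower-property passage from $\hat{X}^{N+1}_{\tau^g}$ to $\text{PFA}$ is what the paper compresses into ``taking the expectation operation and using the idempotent property.'' Your side remarks (that $\mathcal{R}_g^i\subseteq\mathcal{R}_S^i$ follows from $Q\ge\mathcal{C}$, and the care needed at $\{\tau^g=\infty\}$) simply make explicit what the paper leaves implicit.
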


\begin{proof} 
See Appendix for proof

\end{proof}
 
Lemma \ref{lem:greedy} suggests that a pragmatic solution to our optimal stopping problem  is the greedy (sub-optimal) stopping rule $\tau^g$.  Moreover, if this rule is used then the cost function parameters can be related to a false alarm performance trade-off.   The lemma also provides some insight into the role of cost parameters and false alarm performance.  

Now, rather than solving the  dynamic programming recursion equations \eqref{eqn:val} directly or trying to calculate the convex stopping sets $\mathcal{R}^i_S$ (both of which are computationally expensive), we are able to just run the HMM filter \eqref{eqn:CME}. We highlight that you do not need to know the value of the costs constants $\bar{c}_1, c_2, \bar{c}_i$ to run the filter. 

\subsubsection{Proposed Greedy Detection Logic}
We now present two proposed greedy rules for use in the detection logic stage in our multi stage detection pipeline. Due to Lemma \ref{lem:greedy} we know that these stopping rules are a greedy (sub-optimal) solution to our cost criterion \eqref{eqn:cost}.\\
\textbf{Greedy Rule 1:}
We first consider a a simple stopping rule of the form
\begin{equation}
    \tau^{g_1} = \text{inf} \left\{k>0 : \max_{i\in \{1, \dots, N\}}{\hat{X}^i_k} \geq h^{g_1} \right\}, 
\end{equation}
where $\hat{X}^i_k$ is the $i$th element of $\hat{X}_k$.
Intuitively, this rule declares a detection when the probability of being in one of the first $N$ states is higher than a threshold. 


\textbf{Greedy Rule 2:}
We  introduce a mapping $ \mathcal{M}(\cdot)$ which reshapes the pixel elements of the vector $\hat{X}_k$ to an image matrix 
and also introduce the inverse mapping $\mathcal{M}^{-1}(\cdot)$.
Our second stopping rule is of the form
\begin{equation}
       \tau^{g_2} = \text{inf} \left\{k>0 : \max_{i\in \{1, \dots, N\}} {{\zeta}^i(\hat{X}_k)} \geq h^{g_2} \right\},
\end{equation}
where $\zeta^i(\cdot)$ is the $i$th element of $\zeta(\cdot)$ and ${\zeta} (\cdot)\triangleq \mathcal{M}^{-1}\left( \mathcal{M}( \cdot) * \omega\right)  $ and $*$ denotes the convolution operation with the kernel
\begin{equation*}
    \omega = \begin{bmatrix}-1& -1 &-1\\ -1 &1& -1 \\ -1 &-1 &-1  \end{bmatrix}.
\end{equation*}

Intuitively we know that aircraft emerge in an image at approximately 1 pixel in size and hence only occupy one state. As our aim is to detect this aircraft emergence, we propose a kernel that penalizes the states around the aircraft as a type of  penalty. Additionally, we found that this kernel was an effective way to reject cloud artifacts that are generally larger in size (more pixels) than an aircaft emerging in an image sequence. 


\section{Results}\label{sec:res}
In this section we examine the performance of our two proposed detection logic stages in the vision-based aircraft detection application. As a baseline we compare to the theoretically optimal intermittent signal detection (ISD) rule presented in \cite{jamesCST}, we denote this the baseline ISD rule.

\begin{figure}
\begin{center}
\includegraphics[scale=0.5, trim={0.0cm 0.0cm 0.0cm 0.0cm}]{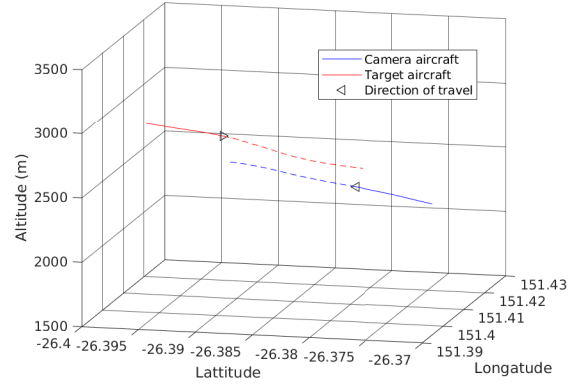}
\caption{An example flight path of a camera and target aircraft in a head on near collision course encounter.}  
\label{fig:illustrative}
\end{center}
\end{figure}

We compare the performance of our proposed detection systems on 15 near mid-air collision course encounters between two fixed wing aircraft; the data collection aircraft was a ScanEagle UAV  and the other aircraft was a Cessna 172. An example of the flight path of a head on near collision course encounter is presented in Figure \ref{fig:flightPath}, and Figure \ref{fig:exampleAircraft} is an example of the size of aircraft we are detecting. For full details of the flight experiments see \cite{Bratanov2017}.

\subsection{Illustrative Example}
\begin{figure}[t!]
\begin{center}
\includegraphics[scale=0.6, trim={0.0cm 1.5cm 0.0cm 0.0cm}]{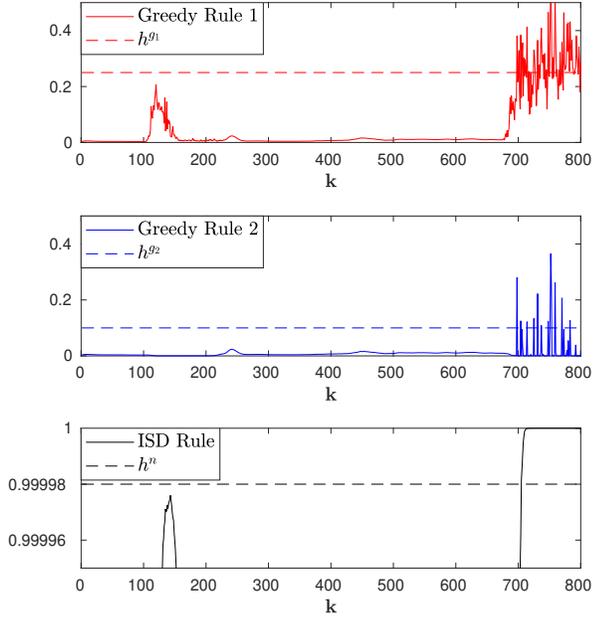}
\caption{An illustrative example of the proposed Greedy rules compared to the baseline ISD rule for Case 10. The aircraft is visually present from image frame $k = 680$ to $ k = 800$. The peak in the Greedy rule 1 test and ISD rule test statistics just after $k=100$ corresponds to cloud artefacts.}  
\label{fig:flightPath}
\end{center}
\end{figure}

For $i \in \{1,\dots,N\}$ Figure \ref{fig:illustrative} shows an illustrative example for Case 10 of the Greedy rule 1 test statistic $(\max_i{\hat{X}^i_k})$ the Greedy rule 2 test statistic $(\max_i{{\zeta}^i})$ and the baseline ISD rule test statistic $(1-\hat{X}^{N+1}_k)$.
The aircraft is visually present from image frame $k = 680$ to $ k = 800$. All 3 test statistics effectively increase at a similar time corresponding to them detecting aircraft presence. Greedy rule 1 and  Greedy rule 2 both correctly declare at $k=698$ (corresponding to aircraft at range $2542.4$m) and the baseline ISD rule correctly declares  slightly later at $k=705$ (corresponding to aircraft at range $2479.8$m).

We highlight the peak in the Greedy rule 1  and ISD rule test statistics just after $k=100$. This corresponds to the rules mistaking small cloud features for aircraft. Importantly, Greedy rule 2 effectively rejects this false alarm.

We also note the choice of thresholds for the 3 rules. Greedy rule 1 has a higher threshold than Greedy rule 2 so that it does not declare false alarms. We also note that the scale of the baseline ISD rule corresponds to a significantly higher  threshold, and to a higher number of significant figures than the  two greedy rules, which can be set between 0 and 1.

\subsection{Detection Range Performance}
\begin{figure}
\begin{center}
\includegraphics[scale=0.5,trim={0.0cm 0.0cm 0.0cm 0.0cm}, clip]{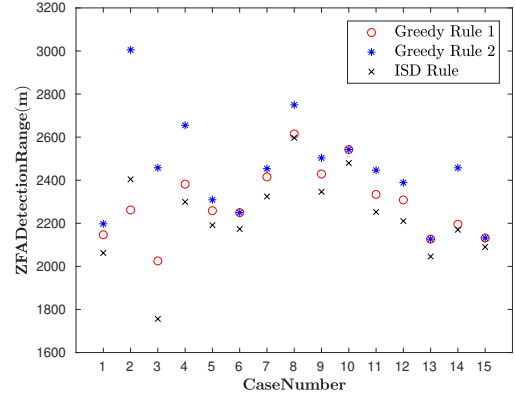}
  \caption{A comparison of the proposed Greedy rules compared to the baseline ISD rule for all 15 cases presented in \cite{Bratanov2017}. The mean detection distance and standard error  was $2295$m and $22$m for Greedy rule 1, $2445$m and  $61$m for Greedy rule 2 and  $2227$m and $52$m for the baseline ISD rule.}
\label{fig:comparison}
\end{center}
\end{figure}

We now investigate the detection range performance across all 15 cases from our test data. 
We highlight that the detection range and false alarm rates vary with the choice of the threshold parameters.
Hence, to ensure fair comparison, we will compare these rules on the basis of the lowest thresholds for each rule that achieve zero false alarms (ZFAs) in this data set.  In practice, detection thresholds could be adaptively selected on the basis of scene difficulty such as proposed in  \cite{molloy2017adapt}.

The resulting ZFA detection ranges are presented in Figure \ref{fig:comparison}.  The mean detection distance and standard error was $2295$m and $22$m for Greedy rule 1, $2445$m and  $61$m for Greedy rule 2 and  $2227$m and $52$m for the baseline ISD rule. Both our Greedy rules improved performance relative to the baseline ISD rule but importantly Greedy rule 2  improved detection ranges  by a mean distance of $218$m (9.8\%).

Intuitively, the performance benefits from Greedy rule 2 are from penalising the non-aircraft artefacts and as a result more effectively rejecting false alarms. This allows for a lower ZFA threshold to be set and therefore an earlier detection can be declared.

\section{Conclusion}\label{sec:conc}

In this paper we presented a new technique for the modelling and detection of aircraft above the horizon that is able to penalise non-aircraft artefacts (such as textured clouds and other unstructured terrain). We first posed the vision-based aircraft detection problem in a Bayesian setting and established some optimal properties.  We then proposed a practical greedy rule and developed some bounds for characterising its performance. 
Finally, we investigated the performance of two greedy rules on real flight data where we were able to improve detection ranges  by a mean distance of $218$m (9.8\%) relative to a current state of the art vision-based aircraft detection technique.



\bibliographystyle{IEEEtran}
\bibliography{IEEEabrv,ref}


\appendix

\subsection{Proof of Theorem 1}
For $i\in \{ 1, \ldots,N\}$, we define the cost for the location  decision $i$ as
\begin{equation}
     {J}_i (\tau, \hat{X}) \triangleq       \mathbb{E} \left[ \left. \sum_{k=0}^{\tau -1}  
     \mathcal{C}(\hat{X}_k)+ \mathcal{S}_i(\hat{X}_\tau)
     \right|   \hat{X}  \right]
\end{equation}
and consider the value function $V_i(\hat{X}_k) \triangleq \min_{\tau} \{ {J}_i(\tau, \hat{X}_k)\}$ for the stopping time
recursion described by 
 \begin{equation}
    \begin{split}
        V_i(\hat{X}_k) =       \min& \left\{\mathcal{C}(\hat{X}_k) \right. \\
         & \left. + \mathbb{E} \left[ \left. V_i\left(\hat{X}^+(\hat{X}_k,y)\right)  \right| \hat{X}_k  \right] , \right.
        \left. \mathcal{S}_i(\hat{X}_k)  \right\}.
    \end{split}
\end{equation}
Noting that the cost is linear here, then according to  \cite[Theorem 7.4.2]{krishnamurthy2016}, $V_i(\hat{X}_k)$ are 
concave in $\hat{X}$. 
Moreover, our location decision $\bar{\mathcal{S}}(\hat{X}) \triangleq \min_{i \in \{ 1, \ldots,N\}  } \mathcal{S}_i(\hat{X}_\tau)$ gives that
 at each $\hat{X}_k$, ${J} (\tau, \hat{X})=\min_i {J}_i (\tau, \hat{X})$, and hence $V(\hat{X}_k) = \min_i V_i(\hat{X}_k)$ (swapping the order of the minimisation operations) implies that $V(\hat{X}_k)$ is 
concave in $\hat{X}$ (concavity is preserved under minimum operations).
The $V(\hat{X}_k)$ concavity gives that $\mathcal{R}^i_S$ are convex sets (see similar proof steps in  \cite[Theorem 12.2.1]{krishnamurthy2016}).

 We proceed by considering the $N$ location decisions as individual optimal stopping problems. For $i\in \{ 1, \ldots,N\}$ if $\mathcal{S}_i(e_i)=0$ then $e_i \in \mathcal{R}^i_S$. 
When $\hat{X} \in \mathcal{R}^i_S$ for any $i$ implies 
$\mathcal{S}_i (\hat{X}) \le Q (\hat{X})$ and under optimal rule \eqref{equ:policy} implies that the optimal action is to stop. Contrarily, $\hat{X} \notin \mathcal{R}^i_S$ for any $i$ implies 
$\mathcal{S}_i (\hat{X}) > Q (\hat{X})$ and under optimal rule \eqref{equ:policy} implies the optimal action is to continue. Hence the union of the $\mathcal{R}^i_S$ regions defines the optimal stop region as given in the theorem statement.

\subsection{Proof of Lemma 1}
For $i\in \{ 1, \ldots,N\}$,  $ \mathcal{C}(\hat{X}_k) \geq \mathcal{S}_i(\hat{X}_k)$ implies that  $Q(\hat{X}_k) \ge \mathcal{S}_i(\hat{X}_k)$  and therefore, according to the optimal policy \eqref{equ:policy}, we should stop when $\hat{X}_k \in \mathcal{R}_g^i$.
Further,  if $\mathcal{S}_i(e_i)=0$ 
then $e_i \in \mathcal{R}_g^i$ giving our first lemma result.  

 Simple algebra shows that $ \mathcal{C}(\hat{X}_k) \geq \mathcal{S}_i(\hat{X}_k)$ implies
$(c_m+c_2) \hat{X}^{N+1}_k + \bar{c}_i \hat{X}_k
< c_m$ (given that $c_m(1-\hat{X}^{N+1}_k) \geq \bar{c}_1 \hat{X}_k$).  Taking the expectation operation on both sides and  using the idempotent property gives our lemma's PFA performance bound result.



\end{document}